\documentclass{llncs}

\usepackage{amsmath}
\usepackage{graphicx}
\usepackage{caption}
\usepackage{amssymb}
\usepackage{amsfonts}
\usepackage{authblk}

\usepackage{lipsum}
\usepackage{xargs}
\usepackage[pdftex,dvipsnames]{xcolor}

\title{On the Computational Complexity of MapReduce}
\date{}

\author{Benjamin Fish\inst{1}\thanks{Contact author. Address: 851 S.\ Morgan St.\ Chicago, IL 60607. Phone:  (312)-996-3041.} \and
Jeremy Kun\inst{1} \and
\'Ad\'am D.\ Lelkes\inst{1} \and
Lev Reyzin\inst{1} \and
Gy\"orgy Tur\'an\inst{1,2}}

\institute{
Department of Mathematics, Statistics, and Computer Science,\\ University of Illinois at Chicago
\\ \email{\{bfish3,jkun2,alelke2,lreyzin,gyt\}@uic.edu}
\and
MTA-SZTE Research Group on Artificial Intelligence, Szeged
}

\newcommand{\N}{\mathbb{N}}
\newcommand{\mrc}{\textup{MRC}}
\newcommand{\bsp}{\textup{BSP}}
\newcommand{\SPACE}{\textup{SPACE}}
\newcommand{\TIME}{\textup{TIME}}
\newcommand{\TISP}{\textup{TISP}}
\renewcommand{\P}{\textup{P}}
\renewcommand{\L}{\textup{SPACE}(\log(n))}
\newcommand{\NP}{\textup{NP}}
\newcommand{\PSPACE}{\textup{PSPACE}}

\begin{document}
\maketitle

\begin{abstract} 

In this paper we study the MapReduce Class (MRC) defined by Karloff~et~al.,
which is a formal complexity-theoretic model of MapReduce. We show that
constant-round MRC computations can decide regular languages and simulate
sublogarithmic space-bounded Turing machines. In addition, we prove hierarchy
theorems for MRC under certain complexity-theoretic assumptions. These theorems
show that sufficiently increasing the number of rounds or the amount of time
per processor strictly increases the computational power of MRC. Our work lays
the foundation for further analysis relating MapReduce to established
complexity classes. Our results also hold for Valiant's BSP model of parallel
computation and the MPC model of Beame~et~al. 

\end{abstract}
\thispagestyle{empty}
\newpage
\setcounter{page}{1}
\section{Introduction}

MapReduce is a programming model originally developed to separate algorithm
design from the engineering challenges of massively distributed computing. A
programmer can separately implement a ``map'' function and a ``reduce''
function that satisfy certain constraints, and the underlying MapReduce
technology handles all the communication, load balancing, fault tolerance, and
scaling. MapReduce frameworks and their variants have been successfully
deployed in industry by Google~\cite{DeanG08}, Yahoo!~\cite{ShvachkoKRC10}, and
many others.

MapReduce offers a unique and novel model of parallel computation because it
alternates parallel and sequential steps, and imposes sharp constraints on
communication and random access to the data. This distinguishes MapReduce from
classical theoretical models of parallel computation and this, along with its
popoularity in industry, is a strong motivation to study the theoretical power
of MapReduce. From a theoretical standpoint we ask how MapReduce relates to
established complexity classes. From a practical standpoint we ask which
problems can be efficiently modeled using MapReduce and which cannot.

In 2010 Karloff~et~al.~\cite{Karloff10} initiated a principled theoretical
study of MapReduce, providing the definition of the complexity class MRC and
comparing it with the classical PRAM models of parallel computing. But to our
knowledge, since this initial paper, almost all of the work on MapReduce has
focused on algorithmic issues.

Complexity theory studies the classes of problems defined by resource bounds on
different models of computation in which they are solved. A central goal of
complexity theory is to understand the relationships between different models,
i.e. to see if the problems solvable with bounded resources on one
computational model can be solved with a related resource bound on a different
model.  In this paper we prove a result that establishes a connection between
MapReduce and space-bounded computation on classical Turing machines.  Another
traditional question asked by complexity theory is whether increasing the
resource bound on a certain computational resource strictly increases the set
of solvable problems. Such so-called hierarchy theorems exist for time and
space on deterministic and non-deterministic Turing machines, among other
settings.  In this paper we prove conditional hierarchy theorems for MapReduce
rounds and time.

First we lay a more precise theoretical foundation for studying MapReduce
computations (Section~\ref{sec:definition}).  In particular, we observe that
Karloff et al.'s definitions are non-uniform, allowing the complexity class to
contain undecidable languages.  We reformulate the definition of
\cite{Karloff10} to make a uniform model and to more finely track the
parameters involved (Section~\ref{subsec:nonuniformity}).  In addition, we
point out that our results hold for other important models of parallel
computations, including Valiant's Bulk-Synchronous Processing (BSP)
model~\cite{Valiant90} and the Massively Parallel Communication (MPC) model of
Beame~et~al~\cite{BeameKS13}.  (Section~\ref{subsec:bspmodel}).  We then prove
two main theorems: $\SPACE(o(\log n))$ has constant-round MapReduce
computations (Section~\ref{sec:spacebound}) and, conditioned on a version of
the Exponential Time Hypothesis, there are strict hierarchies within MRC.  In
particular, sufficiently increasing time or number of rounds increases the
power of MRC (Section~\ref{sec:hierarchy}).

Our sub-logarithmic space result is achieved by a direct simulation, using a
two-round protocol that localizes state-to-state transitions to the section of
the input being simulated, combining the sections in the second round. It is a
major open problem whether undirected graph connectivity (a canonical
logarithmic-space problem) has a constant-round MapReduce algorithm, and our
result is the most general that can be proven without a breakthrough on graph
connectivity.  Our hierarchy theorem involves proving a conditional time
hierarchy within linear space achieved by a padding argument, along with
proving a time-and-space upper and lower bounds on simulating MRC machines
within $\P$. To the best of our knowledge our hierarchy theorem is the first of
its kind. We conclude with a discussion and open questions raised by our work
(Section~\ref{sec:openproblems}).

\section{Background and Previous Work}

\subsection{MapReduce}

The MapReduce protocol can be roughly described as follows. The input data is
given as a list of key-value pairs, and over a series of rounds two things
happen per round: a ``mapper'' is applied to each key-value pair independently
(in parallel), and then for each distinct key a ``reducer'' is applied to all
corresponding values for a group of keys. The canonical example is counting
word frequencies with a two-round MapReduce protocol. The inputs are (index,
word) pairs, the first mapper maps $(k,v) \mapsto (v,k)$, and the first reducer
computes the sum of the word frequencies for the given key. In the second round
the mapper sends all data to a single processor via $(k, n_k) \mapsto (1, (k,
n_k))$, and the second processor formats the output appropriately.

One of the primary challenges in MapReduce is data locality. MapReduce was
designed for processing massive data sets, so MapReduce programs require that
every reducer only has access to a substantially sublinear portion of the
input, and the strict modularization prohibits reducers from communicating
within a round. All communication happens indirectly through mappers, which are
limited in power by the independence requirement. Finally, it's understood in
practice that a critical quantity to optimize for is the number of
rounds~\cite{Karloff10}, so algorithms which cannot avoid a large number of
rounds are considered inefficient and unsuitable for MapReduce.

There are a number of MapReduce-like models in the literature, including the
MRC model of Karloff~et~al.~\cite{Karloff10}, the ``mud'' algorithms of
Feldman~et~al.~\cite{FeldmanMSSS10}, Valiant's BSP model~\cite{Valiant90}, the
MPC model of Beame~et~al.~\cite{BeameKS13}, and extensions or generalizations
of these, e.g.~\cite{GoodrichSZ11}. The MRC class of Karloff~et~al. is the
closest to existing MapReduce computations, and is also among the most
restrictive in terms of how it handles communication and tracks the
computational power of individual processors. In their influential paper
\cite{Karloff10}, Karloff~et~al. display the algorithmic power of MRC, and
prove that MapReduce algorithms can simulate CREW PRAMs which use subquadratic
total memory and processors. It is worth noting that the work of Karloff~et~al.
did not include comparisons to the standard (non-parallel) complexity classes,
which is the aim of the present work.

Since \cite{Karloff10}, there has been extensive work in developing efficient
algorithms in MapReduce-like frameworks. For example,
Kumar~et~al.~\cite{KMVV13} analyze a sampling technique allowing them to
translate sequential greedy algorithms into $\log$-round MapReduce algorithms
with a small loss of quality.  Farahat~et~al.~\cite{FEGK13} investigate the
potential for sparsifying distributed data using random projections. Kamara and
Raykova~\cite{KR13} develop a homomorphic encryption scheme for MapReduce. And
much work has been done on graph problems such as connectivity, matchings,
sorting, and searching~\cite{GoodrichSZ11}. Chu~et~al.~\cite{ChuKLYBNO06}
demonstrate the potential to express any statistical-query learning algorithm
in MapReduce.  Finally, Sarma~et~al.~\cite{Sarma13} explore the relationship
between communication costs and the degree to which a computation is parallel
in one-round MapReduce problems. Many of these papers pose general upper and
lower bounds on MapReduce computations as an open problem, and to the best of
our knowledge our results are the first to do so with classical complexity
classes.

The study of MapReduce has resulted in a wealth of new and novel algorithms,
many of which run faster than their counterparts in classical PRAM models. As
such, a more detailed study of the theoretical power of MapReduce is warranted.
Our paper contributes to this by establishing a more precise definition of the
MapReduce complexity class, proving that it contains sublogarithmic
deterministic space, and showing the existence of certain kinds of hierarchies.

\subsection{Complexity}

From a complexity-theory viewpoint, MapReduce is unique in that it combines
bounds on time, space and communication. Each of these bounds would be very
weak on its own: the total time available to processors is polynomial; the
total space and communication are slightly less than quadratic.  In particular,
even though arranging the communication between processors is one of the most
difficult parts of designing MapReduce algorithms, classical results from
communication complexity do not apply since the total communication available
is more than linear. These innocent-looking bounds lead to serious restrictions
when combined, as demonstrated by the fact that it is unknown whether
constant-round MRC machines can decide graph connectivity (the best known
result achieves a logarithmic number of rounds with high
probability~\cite{Karloff10}), although it is solvable using only logarithmic
space on a deterministic Turing machine.

We relate the MRC model to more classical complexity classes by studying simultaneous time-space bounds. $\TISP(T(n),S(n))$ are the problems that
can be decided by a Turing machine which on inputs of length $n$ takes at most
$O(T(n))$ time and uses at most $O(S(n))$ space. Note that in general it is
believed that $\TISP(T(n), S(n)) \neq \TIME(T(n)) \cap \SPACE(S(n))$.  The
complexity class $\TISP$ is studied in the context of time-space tradeoffs
(see, for example,~\cite{Fortnow00,Williams08}). Unfortunately much less is
known about $\TISP$ than about $\TIME$ or $\SPACE$; for example there is no
known time hierarchy theorem for fixed space. The existence of such a hierarchy
is mentioned as an open problem in the monograph of Wagner and
Wechsung~\cite{WagnerW86}.

To prove the results about $\TISP$ that imply the existence of a hierarchy in
MRC, we use the Exponential Time Hypothesis (ETH) introduced by Impagliazzo,
Paturi, and Zane~\cite{ImpagliazzoP99,ImpagliazzoPZ01}, which conjectures that
3-SAT is not in $\TIME(2^{cn})$ for some $c>0$. This hypothesis and its strong
version have been used to prove conditional lower bounds for specific hard
problems like vertex cover, and for algorithms in the context of fixed
parameter tractability (see, e.g., the survey of Lokshtanov, Marx and
Saurabh~\cite{LokshtanovMS11}). The first open problem mentioned
in~\cite{LokshtanovMS11} is to relate ETH to some other known complexity
theoretic hypotheses.

We show in Lemma~\ref{lemma:eth} that ETH implies directly a time-space
trade-off statement involving time-space complexity classes. This statement is
not a well-known complexity theoretic hypothesis, although it is related to the
existence of a time hierarchy with a fixed space bound. In fact, as detailed in
Section~\ref{sec:hierarchy}, a hypothesis weaker than ETH is sufficient for the
lemma. The relative strengths of ETH, the weaker hypothesis, and the statement
of the lemma seem to be unknown.

\section{Models} \label{sec:definition}

In this section we introduce the model we will use in this paper, a uniform
version of Karloff's MapReduce Class (MRC), and contrast MRC to other models of
parallel computation, such as Valiant's Bulk-Synchronous Parallel (BSP) model,
for which our results also hold.

\subsection{MapReduce and MRC}

The central piece of data in MRC is the key-value pair, which we denote by a
pair of strings $\langle k, v \rangle$, where $k$ is the key and $v$ is the
value. An input to an MRC machine is a list of key-value pairs $\langle k_i,
v_i \rangle_{i=1}^N$ with a total size of $n = \sum_{i=1}^N |k_i| + |v_i|$.
The definitions in this subsection are adapted from~\cite{Karloff10}.

\begin{definition}
A \emph{mapper} $\mu$ is a Turing machine\footnote{The definitions
of~\cite{Karloff10} were for RAMs. However, because we wish to relate MapReduce
to classical complexity classes, we reformulate the definitions here in terms
of Turing machines.} which accepts as input a single key-value pair $\langle k,
v \rangle$ and produces a list of key-value pairs $\langle k_1', v_1' \rangle,
\dots, \langle k_s', v_s' \rangle$. \end{definition}

\begin{definition}
A \emph{reducer} $\rho$ is a Turing machine which accepts as input a key $k$
and a list of values $\langle v_1 , \dots, v_m \rangle$, and produces as output
the same key and a new list of values $\langle v_1', \dots, v_M' \rangle$.
\end{definition}

\begin{definition}
For a decision problem, an input string $x \in \{ 0,1 \}^*$ to an MRC machine
is the list of pairs $\langle i, x_i \rangle_{i=1}^n$ describing the index
and value of each bit. We will denote by $\langle x \rangle$ the list $\langle
i, x_i \rangle$.
\end{definition}

An MRC machine operates in rounds. In each round, a set of mappers running in
parallel first process all the key-value pairs. Then the pairs are partitioned
(by a mechanism called ``shuffle and sort'' that is not considered part of the
runtime of an MRC machine) so that each reducer only receives key-value pairs
for a single key. Then the reducers process their data in parallel, and the
results are merged to form the list of key-value pairs for the next round. More
formally:

\begin{definition}
An \emph{$R$-round MRC machine} is an alternating list of mappers and reducers
$M = (\mu_1, \rho_1, \dots, \mu_R, \rho_R)$. The execution of the machine is as
follows.
For each $r = 1, \dots, R$:
\begin{enumerate}
  \item Let $U_{r-1}$ be the list of key-value pairs generated by round $r-1$
(or the input pairs when $r=1$). Apply $\mu_r$ to each key-value pair of
$U_{r-1}$ to get the multiset $V_r = \bigcup_{\langle k,v \rangle \in U_{r-1}}
\mu_r(k, v).$

  \item Shuffle-and-sort groups the values by key. Call each of the pieces $V_{k,r} =$
$\{ k, (v_{k,1}, \dots, v_{k,s_k})\}.$

  \item Assign a different copy of reducer $\rho_r$ to each $V_{k,r}$ (run in
parallel) and set $U_r = \bigcup_{k}\rho_r(V_{k,r})$.
\end{enumerate}
\end{definition}

The output is the final set of key-value pairs. For decision problems, we
define $M$ to accept $\left \langle x \right \rangle$ if in the final round
$U_R = \emptyset$. Equivalently we may give each reducer a special accept state
and say the machine accepts if at any time any reducer enters the accept state.
We say $M$ \emph{decides} a language $L$ if it accepts $\langle x \rangle$ if
and only if $x \in L$.

The central caveat that makes MRC an interesting class is that the reducers
have space constraints that are sublinear in the size of the input string. In
other words, no sequential computation may happen that has random access to the
entire input. Thinking of the reducers as processors, cooperation between
reducers is obtained not by message passing or shared memory, but rather across
rounds in which there is a global communication step.

In the MRC model we use in this paper, we require that every mapper and reducer
arise as separate runs of the same Turing machine $M$. Our Turing machine $M(m,
r, n, y)$ will accept as input the current round number $r$, a bit $m$ denoting
whether to run the $r$-th map or reduce function, the total size of the input
$n$, and the corresponding input $y$.  Equivalently, we can imagine a list of
mappers and reducers in each round $\mu_1, \rho_1, \mu_2, \rho_2, \dots$, where
the descriptions of the $\mu_i, \rho_i$ are computable in
polynomial time in $|i|$. 

\begin{definition}[Uniform Deterministic MRC]

A language $L$ is said to be in $\mrc[f(n),g(n)]$ if there is a constant $0 < c
< 1$, an $O(n^c)$-space and $O(g(n))$-time Turing machine $M(m, r, n, y)$, and
an $R = O(f(n))$, such that for all $x \in \{ 0,1 \}^n$, the following holds.

\begin{enumerate}
\item Letting $\mu_r = M(1, r, n, -), \rho_r = M(0, r, n, -)$, the MRC machine
$M_R = (\mu_1, \rho_1, \dots, \mu_R, \rho_R)$ accepts $x$ if and only if $x \in
L$.

\item Each $\mu_r$ outputs $O(n^c)$ distinct keys.
\end{enumerate}

\end{definition}

This definition closely hews to practical MapReduce computations:  $f(n)$
represents the number of times global communication has to be performed, $g(n)$
represents the  time each processor gets, and sublinear space bounds in terms of $n =
|x|$ ensure that the size of the data on each processor is smaller than the
full input.

\begin{remark}
By $M(1, r, n, -)$, we mean that the tape of $M$ is initialized by the string
$\langle 1, r, n \rangle$. In particular, if the number of rounds is $2^{\omega(n^c)}$ for all constants $c<1$, the space constraints would prohibit it from storing the round number, which prohibits any MRC algorithm from having this many rounds.
\end{remark}

\begin{remark}\label{remark:timebound}
Note that a polynomial time Turing machine with sufficient time can trivially
simulate a uniform MRC machine. All that is required is for the machine to
perform the key grouping manually, and run the MRC machine as a subroutine. As
such, $\mrc[\text{poly}(n), \text{poly}(n)] \subseteq P$. We give a more precise
computation of the amount of overhead required in the proof of
Lemma~\ref{lem:tisprelation}.
\end{remark}


\begin{definition}
Define by $\mrc^i$ the union of uniform MRC classes
\[
   \mrc^i = \bigcup_{k\in\N}\mrc[\log^i(n),n^k].
\]
\end{definition}

So in particular $\mrc^0=\bigcup_{k\in\N}\text{MRC}[1,n^k]$. 

\subsection{Nonuniformity}  \label{subsec:nonuniformity}

A complexity class is generally called uniform if the descriptions of the
machines solving problems in it do not depend on the input length. Classical
complexity classes defined by Turing machines with resource bounds, such as
$\P$, $\NP$, and $\L$, are uniform. On the other hand, circuit complexity
classes are naturally nonuniform since a fixed Boolean circuit can only accept
inputs of a single length. There is ambiguity about the uniformity of MRC as
defined in~\cite{Karloff10}. Since we wish to relate the MRC model to classical
complexity classes such as $\P$ and $\L$, making sure that the model is uniform
is crucial. Indeed, innocuous-seeming changes to the definitions above
introduce nonuniformity (and in particular this is true of the original MRC
definition in~\cite{Karloff10}). In the appendix we show that the nonuniform
MRC model defined in~\cite{Karloff10} allows MRC machines to solve undecidable
problems in a logarithmic number of rounds, including the halting problem. We
introduced our uniform version of MRC above to rule out such pathological
behavior.

\subsection{Other Models of Parallel Computation} \label{subsec:bspmodel}

Several other models of parallel computation have been introduced, including
the BSP model of Valiant~\cite{Valiant90} and the MPC model of
Beame~et.~al.~\cite{BeameKS13}.  The main difference between BSP and MapReduce
is that in the BSP models the key-value pairs and the shuffling steps needed to
redistribute them are replaced with point-to-point messages.  Similarly
to~\cite{Karloff10}, in Valiant's paper~\cite{Valiant90} there is also
ambiguity about the uniformity of the model. In this paper, when we refer to
BSP we mean a uniform deterministic version of the model. We give the exact
definition in the appendix.

Goodrich~et al.~\cite{GoodrichSZ11} and Pace~\cite{Pace12} showed that
MapReduce computations can be simulated in the BSP model and vice versa, with
only a constant blow-up in the computational resources needed.  This implies
that our theorems about MapReduce automatically apply to BSP.

Similarly, the MPC model uses point-to-point messages and Beame et.~al.'s
paper~\cite{BeameKS13} does not discuss the uniformity of the model.  The main
distinguishing charateristic of the MPC model is that it introduces the number
of processors $p$ as an explicit paramter.  Setting $p=O(n^c)$, our results
will also hold in this model.

There are other variants of these models, including the model that
Andoni~et.~al.~\cite{AndoniNOY14} uses, which follows the MPC model but also
introduces the additional constraint that total space used across each round
must be no more than $O(n)$.  It is straightforward to check that the proofs of
our results never use more than $O(n)$ space, implying that our results hold
even under this more restrictive model.

\section{Space Complexity Classes in $\mrc^0$} \label{sec:spacebound}

In this section we prove that small space classes are contained in
constant-round MRC.  Again, the results in this section also hold for other
similar models of parallel computation, including the BSP model and the MPC model.
First, we prove that the class REGULAR of regular languages
is in $\mrc^0$. It is well known that $\textup{SPACE}(O(1)) = \textup{REGULAR}$
\cite{Shepherdson59}, and so this result can be viewed as a warm-up to the
theorem that $\textup{SPACE}(o(\log n)) \subseteq \mrc^0$. Indeed, both proofs
share the same flavor, which we sketch before proceeding to the details.

We wish to show that any given DFA can be simulated by an $\mrc^0$ machine.
The simulation works as follows:
in the first round each parallel processor receives a contiguous portion of the
input string and constructs a state transition function using the data of the
globally known DFA. Though only the processor with the beginning of the string
knows the true state of the machine during its portion of the input, all
processors can still compute the \emph{entire} table of state-to-state
transitions for the given portion of input. In the second round, one processor
collects the transition tables and chains together the computations, and this
step requires only the first bit of input and the list of tables.

We can count up the space and time requirements to prove the following theorem.

\begin{theorem}\label{thm:reg}
   $\textup{REGULAR} \subsetneq \mrc^0$
\end{theorem}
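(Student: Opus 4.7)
The plan follows the two-round sketch given just before the theorem. Fix a regular language $L$ decided by a DFA $A = (Q, \Sigma, \delta, q_0, F)$ of constant size, and set $c = 1/2$. I will partition the input positions $\{1, \ldots, n\}$ into $\lceil \sqrt{n}\, \rceil$ contiguous blocks of length at most $\lceil \sqrt{n}\, \rceil$, and simulate $A$ block by block.

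In round $1$ the mapper sends each pair $\langle i, x_i \rangle$ to key $k = \lfloor (i-1)/\sqrt{n}\, \rfloor$ with value $(i, x_i)$. The reducer assigned to key $k$ receives the $k$-th block $B_k$ (sorted by position) and computes the entire transition table $T_k \colon Q \to Q$ defined by $T_k(q) = \delta^*(q, B_k)$; it emits $\langle k, T_k \rangle$. In round $2$ a single mapper routes every $\langle k, T_k \rangle$ to key $1$, and the lone reducer sorts the received tables by index, composes them starting from $q_0$ to obtain $T_{\sqrt{n}-1} \circ \cdots \circ T_0(q_0)$, and accepts iff the resulting state lies in $F$. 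Correctness is a straightforward induction on block composition: the composition of the $T_k$ in order is exactly $\delta^*(q_0, \cdot)$ applied to the full input.

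For the resource accounting: since $|Q|$ is constant, each $T_k$ has constant description size; a round-$1$ reducer reads $O(\sqrt{n})$ input bits and uses $O(1)$ additional space; the round-$2$ reducer stores $O(\sqrt{n})$ constant-size tables. All of these fit in $O(n^{1/2})$ space and polynomial time, and the mappers produce $O(\sqrt{n})$ and $1$ distinct keys respectively. The number of rounds is $2 = O(\log^0 n) = O(1)$, so $L \in \mrc^0$. Uniformity is immediate since the fixed DFA $A$ can be hard-coded into the single Turing machine $M(m, r, n, y)$.

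For the strict inclusion, I will exhibit a non-regular language in $\mrc^0$. The language $L_0 = \{0^m 1^m : m \geq 0\}$ suffices: each round-$1$ reducer summarizes its block by $(\#\,0\text{s}, \#\,1\text{s}, \text{shape})$, where the shape records whether the block matches $0^*$, $1^*$, $0^*1^*$, or none of these; a single round-$2$ reducer then checks that the sequence of shapes fits the pattern $(0^*)^a(0^*1^*)^{\leq 1}(1^*)^b$ with no ``none'' block, and that the total $0$-count equals the total $1$-count. The main obstacle is the balance in the resource accounting: with block size $n^{1-c}$ at each round-$1$ reducer and aggregate table size $\Theta(n^c)$ at the round-$2$ reducer, $c = 1/2$ is the unique balance point that places both quantities inside the $O(n^c)$ space bound. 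Once this balance is fixed, the simulation and the non-regular witness are routine, and no further complexity-theoretic assumptions are needed.
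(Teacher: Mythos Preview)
Your proof is correct and matches the paper's approach for the containment: partition the input into $\sqrt{n}$ contiguous blocks, have each round-1 reducer compute the full state-to-state transition table for its block, and compose the tables at a single round-2 reducer. For strictness the paper uses MAJORITY rather than $\{0^m1^m\}$, but either non-regular witness works; your claim that $c=1/2$ is the \emph{unique} balance point is a slight overstatement (any $c\ge 1/2$ works with block size $n^{1-c}$, and the paper's remark after the theorem notes exactly this trade-off), but this does not affect correctness.
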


\begin{proof}
Let $L$ be a regular language and $D$ a deterministic finite automaton
recognizing $L$. Define the first mapper so
that the $j^\textup{th}$ processor has the bits from $j\sqrt{n}$ to
$(j+1)\sqrt{n}$. This means we have $K = O(\sqrt{n})$ processors in the first
round. Because the description of $D$ is independent of the size of the input
string, we also assume each processor has access to the relevant set of states
$S$ and the transition function $t: S \times \left \{ 0,1 \right \} \to S$.

We now define $\rho_1$. Fix a processor $j$ and call its portion of the input
$y$. The processor constructs a table $T_j$ of size at most $|S|^2 = O(1)$ by
simulating $D$ on $y$ starting from all possible states and recording the state
at the end of the simulation. It then passes $T_j$ and the first bit of $y$ to
the single processor in the second round.

In the second round the sole processor has $K$ tables $T_j$ and the first bit
$x_1$ of the input string $x$ (among others but these are ignored). Treating
$T_j$ as a function, this processor computes $q = T_K(\dots T_2(T_1(x_1)))$ and
accepts if and only if $q$ is an accepting state. This requires $O(\sqrt{n})$
space and time and proves containment. To show this is strict, inspect the
prototypical problem of deciding whether the majority of bits in the input are
1's.
\end{proof}

\begin{remark} \label{rmk:tradeoff}
While the definition of $\mrc^0$ inclues languages with time complexity
$O(n^k)$ for all $k \geq 0$, our Theorem~\ref{thm:reg} is more efficient than
the definition implies: we show that regular languages can be computed in
$\mrc^0$ in time and space $O(\sqrt{n})$, with the option of a tradeoff between
time $n^\varepsilon$ and space $n^{1-\varepsilon}$.  
\end{remark}

One specific application of this result is that for any given regular
expression, a two-round MapReduce computation can decide if a string matches
that regular expression, even if the string is so long that any one machine can
only store $n^\epsilon$ bits of it.

We now move on to prove $\SPACE(o(\log n))\subseteq \mrc^0$. It is worth noting
that this is a strictly stronger statement than Theorem~\ref{thm:reg}. That
is, $\textup{REGULAR}=\SPACE(O(1))\subsetneq \SPACE(o(\log n))$. Several
non-trivial examples of languages that witness the strictness of this
containment are given in~\cite{szepietowski1994turing}.

The proof is very similar to the proof of Theorem~\ref{thm:reg}:  Instead of
the processors computing the entire table of state-to-state transitions of a
DFA, the processors now compute the entire table of all transitions possible
among the configurations of the work tape of a Turing machine that uses $o(\log
n)$ space. 

\begin{theorem} \label{thm:sublogspace}
$\textup{SPACE}(o(\log n))\subseteq \mrc^0$.
\end{theorem}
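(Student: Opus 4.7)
The plan is to follow the structure of the proof of Theorem~\ref{thm:reg}, but with ``internal configurations'' of a space-$s(n)$ Turing machine $M$ playing the role of DFA states. Let $L\in\SPACE(s(n))$ with $s(n)=o(\log n)$, and let $M$ decide $L$ in work space $O(s(n))$. Define an internal configuration of $M$ to be a triple (state, work-tape contents, work-tape head position). The number of such configurations is $C=|Q|\cdot|\Gamma|^{O(s(n))}\cdot O(s(n))=2^{O(s(n))}=n^{o(1)}$, which is subpolynomial; this is the single place the hypothesis $s(n)=o(\log n)$ is used.

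First I would partition the input into $K=\lceil\sqrt{n}\rceil$ contiguous blocks of length at most $\sqrt{n}$ by having the round-1 mapper send each bit $\langle i,x_i\rangle$ to a reducer keyed by its block index $\lfloor i/\sqrt{n}\rfloor$. The round-1 reducer for block $j$ then computes a transition table $T_j$ with entries indexed by pairs (internal configuration $\gamma$, entry side $d\in\{\text{left},\text{right}\}$). For each such pair it simulates $M$ on block $j$ starting from $\gamma$ with the input head placed at the specified end of the block, and records either (i) the internal configuration and side at which the input head first leaves block $j$, (ii) a final accept/reject verdict if $M$ halts with its input head still inside block $j$, or (iii) ``loop,'' detected by aborting once the simulation runs for $C\cdot\sqrt{n}$ steps without the head leaving the block (any longer run must revisit some (internal configuration, intra-block head position) pair). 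Each of the $2C=n^{o(1)}$ simulations uses $O(\sqrt{n})$ space and polynomial time, so $|T_j|=n^{o(1)}$.

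In round 2, a single reducer collects all $K$ tables and chains the lookups together, starting in the initial internal configuration at the left end of block 1. Each lookup advances the simulation by one cross-block step. The reducer accepts or rejects when it reaches a halt entry, rejects on a ``loop'' entry, and also rejects if it ever sees a repeated global crossing state (block index, internal configuration, entry side); there are only $2KC=n^{1/2+o(1)}$ such states, so a repeat certifies an overall loop of $M$. Total space is $O(K\cdot|T_j|)=n^{1/2+o(1)}$, bounded by $O(n^c)$ for any $c>1/2$, and total time is polynomial, placing $L$ in $\mrc^0$.

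The main obstacle is the round-1 bookkeeping required to keep $|T_j|$ subpolynomial: the tables must be indexed only by entry side, not by intra-block input-head position, although the simulation inside the reducer does need to track that position in order to know when the head exits and to detect intra-block loops. It is also where the assumption $s(n)=o(\log n)$ is essential---at $s(n)=\Theta(\log n)$ one would have $C=\mathrm{poly}(n)$, the tables would already be polynomial per block, and the round-2 reducer would exceed its $n^c$ space budget, consistent with the authors' observation that extending this approach to logspace would require a breakthrough on constant-round graph connectivity.
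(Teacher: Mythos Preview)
Your proof is correct and follows essentially the same approach as the paper's: partition the input into $\sqrt{n}$ blocks, have each round-1 reducer compute a transition table indexed by (internal configuration, entry side), and chain the tables together on a single reducer in round~2. You are in fact more careful than the paper about loop detection---both the intra-block loops that can arise when simulating from an unreachable configuration and the bound $2KC$ on the number of cross-block crossings---whereas the paper simply asserts the block simulation ``takes polynomial time'' and writes the round-2 composition as $T_k^*(\dots T_1^*(\emptyset,L,\text{initial}))$ without bounding $k$.
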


\begin{proof}
Let $L$ be a language in $\textup{SPACE}(o(\log n))$ and $T$ a Turing machine
recognizing $L$ in polynomial time and $o(\log(n))$ space, with a read/write
work tape $W$. Define the first mapper so that the $j^\textup{th}$ processor
has the bits from $j\sqrt{n}$ to $(j+1)\sqrt{n}$. Let $\mathcal{C}$ be the set
of all possible configurations of $W$ and let $S$ be the states of $T$. Since
the size of $S$ is independent of the input, we can assume that each processor
has the transition function of $T$ stored on it.

Now we define $\rho_1$ as follows: Each processor $j$ constructs the graph of a
function $T_j:\mathcal{C} \times \{L,R\} \times S \rightarrow \mathcal{C}
\times \{L,R\} \times S$, which simulates $T$ when the read head starts on
either the left or right side of the $j$th $\sqrt{n}$ bits of the input and $W$
is in some configuration from $\mathcal{C}$. It outputs whether the read head
leaves the $y$ portion of the read tape on the left side, the right side, or
else accepts or rejects. To compute the graph of $T_j$, processor $j$
simulates $T$ using its transition function, which takes polynomial time.

Next we show that the graph of $T_j$ can be stored on processor $j$ by showing
it can be stored in $O(\sqrt{n})$ space. Since $W$ is by assumption size
$o(\log n)$, each entry of the table is $o(\log n)$, so there are $2^{o(\log
n)}$ possible configurations for the tape symbols. There are also $o(\log n)$
possible positions for the read/write head, and a constant number of states $T$
could be in. Hence $|\mathcal{C}| = 2^{o(\log n)} o(\log n) = o(n^{1/3})$.
Then processor $j$ can store the graph of $T_j$ as a table of size
$O(n^{1/3})$.

The second map function $\mu_2$ sends each $T_j$ (there are $\sqrt{n}$ of them)
to a single processor. Each is size $O(n^{1/3})$, and there are $\sqrt{n}$ of
them, so a single processor can store all the tables. Using these tables, the
final reduce function can now simulate $T$ from starting state to either the
accept or reject state by computing $q=T_k^*(\ldots
T_2^*(T_1^*(\emptyset,L,initial)))$ for some $k$, where $\emptyset$ denotes the
initial configuration of $T$, $initial$ is the initial state of $T$, and $q$ is
either in the accept or reject state. Note $T_j^*$ is the modification of $T_j$
such that if $T_j(x)$ outputs $L$, then $T_j^*(x)$ outputs $R$ and vice versa.
This is necessary because if the read head leaves the $j^\textup{th}$
$\sqrt{n}$ bits to the right, it enters the ${j+1}^\textup{th}$ $\sqrt{n}$ bits
from the left, and vice versa. Finally, the reducer accepts if and only if $q$ is in an
accept state.

This algorithm successfully simulates $T$, which decides $L$, and only takes a
constant number of rounds, proving containment. \end{proof}

\section{Hierarchy Theorems}\label{sec:hierarchy}

In this section we prove two main results (Theorems~\ref{thm:roundhierarchy}
and~\ref{thm:timehierarchy}) about hierarchies within MRC relating to increases
in time and rounds.  They imply that allowing MRC machines sufficiently more
time or rounds strictly increases the computing power of the machines.  The
first theorem states that for all $\alpha, \beta$ there are problems $L \not
\in \mrc[n^\alpha, n^\beta]$ which can be decided by \emph{constant time} MRC
machines when given enough extra rounds.

\begin{theorem} \label{thm:roundhierarchy}
Suppose the ETH holds with constant $c$. Then for every $\alpha,\beta\in\mathbb N$
there exists a $\gamma = O(\alpha + \beta)$ such that $$\mrc[n^\gamma,1]
\not\subseteq \mrc[n^\alpha, n^\beta].$$
\end{theorem}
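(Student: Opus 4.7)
The plan is to combine a padding argument based on ETH with a simulation of MRC machines inside $\TISP$. The high-level strategy is diagonalization via padding: exhibit a language that admits a slow but round-rich MRC algorithm while provably requiring more sequential time than any $\mrc[n^\alpha, n^\beta]$ machine can afford.

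First, I would apply Lemma~\ref{lem:tisprelation} to simulate an arbitrary $\mrc[n^\alpha, n^\beta]$ computation by a Turing machine running in $\TISP(n^{\alpha+\beta+c_1}, n^{c_2})$ for some constants $c_1$ and $c_2 < 1$. The simulation plays out each of the $n^\alpha$ rounds sequentially, running each of the $O(n^{c_2})$ reducers in turn for time $n^\beta$ and reusing the same workspace, with polynomially bounded overhead for the shuffle-and-sort bookkeeping.

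Second, I would use ETH together with padding to construct a language $L$ outside $\TISP(n^{\alpha+\beta+c_1}, n^{c_2})$ but decidable in polynomial time. Pad 3-SAT instances of size $m$ with dummy symbols up to length $n = 2^{cm/(\alpha+\beta+C)}$ for a sufficiently large constant $C$. Under ETH, 3-SAT of size $m$ requires time $2^{cm} = n^{\alpha+\beta+C}$, so padded 3-SAT on inputs of length $n$ cannot be decided in time $n^{\alpha+\beta+c_1}$, hence it is not in $\TISP(n^{\alpha+\beta+c_1}, n^{c_2})$. Combined with the first step this gives $L \notin \mrc[n^\alpha, n^\beta]$. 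A clean packaging of this is available via Lemma~\ref{lemma:eth}, which already isolates the ETH-to-TISP trade-off.

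Third, I would show the same padded language lies in $\mrc[n^\gamma, 1]$ for $\gamma = \alpha + \beta + O(1) = O(\alpha+\beta)$. Since $m = \Theta(\log n)$, brute-force satisfiability checking amounts to $2^m \cdot \mathrm{poly}(m) = n^{\alpha+\beta+O(1)}$ constant-time operations. Using $n^{c_2}$ processors per round over $n^\gamma$ rounds, one dedicates a separate reducer to each small batch of clause-assignment pairs, routing candidate assignments through the clauses across rounds and aggregating partial verdicts in a final round. This respects both the constant time-per-processor bound and the $O(n^{c_2})$ distinct-keys constraint.

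The main obstacle I anticipate lies in this third step: designing a brute-force MRC algorithm in which no individual processor ever performs more than $O(1)$ work. A processor cannot even read a full $m$-bit assignment index or full clause in constant time, so all state must be encoded into the key/value labels assigned by the mappers, and each per-assignment verification must be distributed across many rounds. Arranging keys so the $O(n^{c_2})$ distinct-key bound is respected while $n^{\alpha+\beta+O(1)}$ total units of work are accomplished in $O(\alpha+\beta)$ rounds is the delicate combinatorial piece; the TISP simulation and ETH-based lower bound then complete the argument with no further difficulty.
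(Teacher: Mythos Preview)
Your overall architecture matches the paper's: combine Lemma~\ref{lemma:eth} (ETH padding to get a language hard for $\TIME(n^{\alpha+\beta+2})$) with Lemma~\ref{lem:tisprelation} (sandwiching $\mrc[n^\alpha,n^\beta]$ between $\TISP$ classes). Your first step is exactly the right-hand containment of Lemma~\ref{lem:tisprelation}, and your second step is Lemma~\ref{lemma:eth}. One minor correction there: the simulation in Lemma~\ref{lem:tisprelation} lands in $\TISP(n^{\alpha+\beta+2},n^2)$, not space $n^{c_2}$ with $c_2<1$; this is harmless since only the time bound is used.

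Where you diverge from the paper, and where the real difficulty you flag comes from, is step three. You attempt to place the padded 3-SAT language in $\mrc[n^\gamma,1]$ by hand-crafting a brute-force enumeration algorithm that respects the $O(1)$-time-per-processor constraint, and you correctly anticipate that this is delicate (a processor cannot even scan an $m$-bit assignment in constant time). The paper bypasses this entirely by using the \emph{left} end of the chain in Lemma~\ref{lem:tisprelation}: $\TISP(n^\gamma,n)\subseteq \mrc[n^\gamma,1]$. Since Lemma~\ref{lemma:eth} already gives $L\in\TISP(n^\gamma,n)$, this containment places $L$ in $\mrc[n^\gamma,1]$ for free. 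The underlying simulation is generic---distribute the linear-space tape across processors and simulate one Turing-machine step per round, passing the head position along---so no problem-specific combinatorics are needed. You invoked Lemma~\ref{lem:tisprelation} for the upper bound on $\mrc$ but overlooked that the same lemma supplies the membership direction as well; once you use both ends, the ``delicate combinatorial piece'' disappears and the proof is two lines.
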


The second theorem is analogous for time, and says that there are problems $L
\not \in \mrc[n^\alpha, n^\beta]$ that can be decided by a \emph{one round} MRC
machine given enough extra time.

\begin{theorem} \label{thm:timehierarchy}
Suppose the ETH holds with constant $c$. Then for every $\alpha,\beta\in\mathbb N$
there exists a $\gamma = O(\alpha + \beta)$ such that $$\mrc[1, n^\gamma]
\not\subseteq \mrc[n^\alpha, n^\beta].$$
\end{theorem}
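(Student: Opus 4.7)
The strategy mirrors the proof of Theorem~\ref{thm:roundhierarchy}, but uses heavier padding so that the witness language can be collapsed into a single MRC round. The three ingredients are the MRC-to-$\TISP$ simulation (Lemma~\ref{lem:tisprelation}), the ETH-based time hierarchy inside a fixed polynomial space bound (Lemma~\ref{lemma:eth}), and a padding argument that routes all input bits to one reducer.

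By Lemma~\ref{lem:tisprelation}, $\mrc[n^\alpha, n^\beta] \subseteq \TISP(n^p, n^q)$ for explicit $p = p(\alpha,\beta)$ and $q = q(\alpha,\beta)$ that are linear in $\alpha + \beta$. Lemma~\ref{lemma:eth} then supplies, under ETH, a witness language $L \in \TISP(n^t, n^s)$ with $s$ a small fixed constant and $t$ chosen so that $L \notin \TISP(n^{p/(1-\varepsilon)}, n^{q/(1-\varepsilon)})$ for a small $\varepsilon > 0$ to be fixed. This $L$ will play the role of the hard language.

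To put the padded language $L' = \{ x\, 1^{N-|x|} : x \in L,\; N = |x|^{1/(1-\varepsilon)} \}$ into $\mrc[1, N^\gamma]$, the single mapper rewrites every input pair $\langle i,b\rangle$ as $\langle 1, (i,b)\rangle$, so that all bits are delivered to one reducer keyed by $1$. With the MRC space constant $c$ chosen so that $(1-\varepsilon)s \leq c < 1$, that reducer has enough work space to run the $\TISP(n^t, n^s)$ algorithm for $L$ on the underlying $x$, whose length is $n = N^{1-\varepsilon}$. The running time is $n^t = N^{(1-\varepsilon)t} \leq N^\gamma$ whenever $\gamma \geq (1-\varepsilon)t$, which yields $\gamma = O(t) = O(\alpha + \beta)$.

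For the non-containment, suppose for contradiction that $L' \in \mrc[N^\alpha, N^\beta]$. Lemma~\ref{lem:tisprelation} then gives $L' \in \TISP(N^p, N^q)$, and undoing the padding via $|x| = N^{1-\varepsilon}$ produces a $\TISP(n^{p/(1-\varepsilon)}, n^{q/(1-\varepsilon)})$ algorithm for $L$, contradicting the choice of $L$ from Lemma~\ref{lemma:eth}. The main obstacle is the bookkeeping of exponents: the space constant $c$, the padding exponent $1/(1-\varepsilon)$, and the time exponent $\gamma$ must be chosen simultaneously so that (i) a single reducer's $O(N^c)$ work space accommodates the $\TISP$ simulation on the unpadded instance, (ii) the $N^\gamma$ time budget absorbs the $(N^{1-\varepsilon})^t$ running time, and (iii) the inflation $p \mapsto p/(1-\varepsilon)$ induced by the padding on the lower-bound side still lies in the regime where Lemma~\ref{lemma:eth} separates. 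All three constraints are linear in $\alpha+\beta$, which is what delivers the promised $\gamma = O(\alpha+\beta)$.
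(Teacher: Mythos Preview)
Your overall strategy---pad the hard language from Lemma~\ref{lemma:eth}, place the padded language in a one-round MRC class by routing the real instance to a single reducer, and derive the non-containment via Lemma~\ref{lem:tisprelation}---matches the paper's. You organize the parameters differently (tracking a $\TISP$ lower bound where the paper uses the cleaner $\TIME$ lower bound that Lemma~\ref{lemma:eth} already supplies, and using a general padding exponent $1/(1-\varepsilon)$ where the paper simply takes quadratic padding), but the architecture is the same.

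However, the step placing $L'$ into $\mrc[1,N^\gamma]$ has a genuine gap. You have the mapper rewrite \emph{every} input pair $\langle i,b\rangle$ as $\langle 1,(i,b)\rangle$, so that all $\Theta(N)$ pairs are delivered to one reducer. In the MRC model each reducer is an $O(N^c)$-space machine with $c<1$, and this bound governs the data it receives; the whole point of the model is that no single processor sees the full input. The quantity you bound, $(1-\varepsilon)s$, is only the \emph{work space} of the $\TISP$ simulation on the length-$n$ core instance; it says nothing about the $\Theta(N)$ key--value pairs your reducer must ingest before it can even locate $x$. As written, this is not a valid MRC computation.

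The fix is exactly what the paper does: have the mapper \emph{discard} the padding positions and forward only the first $n$ indices to the single reducer (the mapper is given $N$, hence can compute $n$ and test $i\le n$ locally). With quadratic padding the reducer then receives $O(n)=O(\sqrt{N})$ data, comfortably within $O(N^c)$; under your $1/(1-\varepsilon)$ padding it would receive $O(N^{1-\varepsilon})$. After this correction your argument goes through, and in fact the $\TIME$ lower bound already provided by Lemma~\ref{lemma:eth} suffices for the contradiction, so the extra space bookkeeping on the lower-bound side is unnecessary.
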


As both of these theorems depend on the ETH, we first prove a complexity-theoretic
lemma that uses the ETH to give a time-hierarchy within linear space $\TISP$.
Recall that $\TISP$ is the complexity class defined by simultaneous time and 
space bounds.
The lemma can also be described as a time-space tradeoff.
For some $b > a$ we prove
the existence of a language that can be decided by a Turing machine with
simultaneous $O(n^b)$ time and linear space, but cannot be decided by a Turing
machine in time $O(n^a)$ even without any space restrictions. It is widely
believed such languages exist for \emph{exponential} time classes (for example,
TQBF, the language of true quantified Boolean formulas, is a linear space
language which is PSPACE-complete). We ask whether such tradeoffs can be
extended to polynomial time classes, and this lemma shows that indeed this is
the case.

\begin{lemma}\label{lemma:eth} Suppose that the ETH holds with constant $c$. Then
for any positive integer $a$ there exists a positive integer $b>a$ such that
$$ \TISP(n^b, n) \nsubseteq \TIME(n^a).$$ \end{lemma}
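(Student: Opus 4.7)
My plan is to prove the lemma by a standard padding argument: take a 3-SAT instance and pad it enough so that the naive brute-force algorithm becomes polynomial in the padded length while still fitting in linear space, then appeal to ETH to rule out a much faster algorithm on the padded language.

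More concretely, for a positive integer $k$ to be chosen later (depending on $a$ and $c$), define
\[
  L_k = \{\, \varphi \, \# \, 1^{m} : \varphi \text{ is a satisfiable 3-CNF formula},\ |\varphi| = n,\ m = 2^{\lceil n/k \rceil}\,\}.
\]
On an input of length $N$, the padding forces $n = \Theta(k \log N)$. A Turing machine deciding $L_k$ first verifies that the suffix has the correct form (linear time, logarithmic space), then runs the naive brute-force 3-SAT algorithm on $\varphi$, trying all $2^n$ assignments one at a time. Brute force runs in time $2^{n} \cdot \text{poly}(n) = N^{k} \cdot \text{poly}(\log N)$ and space $O(n) = O(\log N)$, so $L_k \in \TISP(N^{k+1}, N)$, well within the linear space budget.

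For the lower bound, suppose for contradiction that $L_k \in \TIME(N^a)$. Given a 3-SAT instance $\varphi$ of length $n$, we decide it by writing down $\varphi \# 1^{m}$ with $m = 2^{\lceil n/k \rceil}$ (which takes time $O(2^{n/k})$) and then running the assumed $N^a$-time decider for $L_k$ on this padded input. The total running time is at most $O(2^{an/k})$. Choosing $k$ with $a/k < c$ makes this asymptotically faster than $2^{cn}$, contradicting ETH. Fix $k = \max(\lceil a/c \rceil + 1,\ a+1)$ and set $b = k+1$; then $b > a$ and the two bounds combine to give $L_k \in \TISP(N^b, N) \setminus \TIME(N^a)$.

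The argument is essentially an exercise in parameter bookkeeping; the main subtlety to verify carefully is the simultaneous bound in the upper part, since we need brute-force SAT-solving to respect \emph{linear} space in the padded length $N$, not merely polynomial space, while also producing a total runtime bounded by a fixed polynomial $N^b$ independent of $a$. Both follow from the fact that, after padding, the relevant instance size $n$ collapses to $O(\log N)$, so the exponential-in-$n$ cost of brute force turns into the polynomial-in-$N$ cost $N^{k}$ and the linear-in-$n$ space becomes logarithmic in $N$.
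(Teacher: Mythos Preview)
Your proof is correct and takes essentially the same approach as the paper: pad 3-SAT by roughly $2^{n/k}$ so that brute-force search becomes $N^{O(k)}$ time and sublinear space in the padded length $N$, while an $N^a$-time algorithm would solve 3-SAT in $2^{(a/k)n}$ time and violate ETH for $k>a/c$. The paper's write-up differs only in cosmetic parameter choices (it pads with $2^{\delta n}$ zeros for $\delta=\tfrac12(1/b+c/a)$ and $b=\lceil a/c\rceil+2$) and is terser about the space verification, which you spell out more explicitly.
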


\begin{proof}
By the ETH, 3-SAT $\in \TISP(2^n, n)\setminus \TIME(2^{cn})$. Let
$b:=\lceil\frac{a}{c}\rceil+2$, $\delta:=\frac12(\frac1b+\frac{c}{a})$. Pad
3-SAT with $2^{\delta n}$ zeros and call this language $L$, i.e. let $L:=\{x
0^{2^{\delta|x|}} \mid x \in \textup{3-SAT}\}$. Let $N:=n+2^{\delta n}$. Then
$L\in \TISP(N^b, N)$ since $N^b > 2^n$. On the other hand, assume for
contradiction that $L \in \TIME(N^a)$. Then, since $N^a < 2^{cn}$, it follows
that $\textup{3-SAT} \in \TIME(2^{cn})$, contradicting the ETH.
\end{proof}

There are a few interesting complexity-theoretic remarks about the above proof.
First, the starting language does not need to be 3-SAT, as the only assumption
we needed was its hypothesized time lower bound.  We could relax the assumption
to the hypothesis that there exists a $c>0$ such that TQBF,
the $\PSPACE$-complete language of true quantified Boolean formulas,
requires $2^{cn}$ time, or further still to the following complexity hypothesis.

\begin{conjecture} \label{conj:weaketh}
There exist $c',c$ satisfying $0 < c' < c < 1$ such that
$\TISP(2^n, 2^{c'n}) \setminus \TIME(2^{cn})\ne\emptyset$.
\end{conjecture}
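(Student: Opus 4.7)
My approach would be diagonalization against $\TIME(2^{cn})$, attempting to leverage the classical deterministic time hierarchy. The standard diagonal language is $D = \{\langle M, x \rangle : M \text{ rejects } \langle M, x \rangle \text{ within } 2^{c|x|} \text{ steps}\}$. A careful universal simulator evaluates $D$ in time $O(n \cdot 2^{cn})$ and, since it only needs to track the cells actually used by the simulated machine, in simultaneous space $O(2^{cn})$. Combined with the textbook argument that $D \notin \TIME(2^{cn})$, this yields $D \in \TISP(2^n, 2^{cn}) \setminus \TIME(2^{cn})$, which is tantalizingly close to the conjecture but delivers $c' = c$ rather than the required strict inequality $c' < c$.

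The hard part is closing this remaining gap. Diagonalizing against $\TIME(2^{cn})$ appears to intrinsically couple the simulator's space to the simulated machine's time, since a machine $M \in \TIME(2^{cn})$ can in principle use up to $2^{cn}$ tape cells. I would try two escape routes, neither of which is known to succeed. First, restrict the diagonalization to machines using at most $2^{c'n}$ space; this gives a clean separation $\TISP(2^n, 2^{c'n}) \setminus \TISP(2^{cn}, 2^{c'n})$, but lifting the right-hand side to all of $\TIME(2^{cn})$ requires a general inclusion $\TIME(2^{cn}) \subseteq \SPACE(2^{c'n})$ with $c' < c$, and the best available tool in this direction is Hopcroft--Paul--Valiant, which saves only a logarithmic factor. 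Second, pad $D$ by $2^{\varepsilon n}$ trivial symbols; at the new length $N \approx 2^{\varepsilon n}$ this produces $\TISP(N^{1/\varepsilon}, N^{c/\varepsilon}) \setminus \TIME(N^{c/\varepsilon})$, a polynomial-scale separation that does not translate back into an exponential-scale one.

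Consequently, I would not expect to prove the conjecture unconditionally with present techniques. I would instead record the implications from stronger hypotheses: the ETH directly furnishes $\text{3-SAT} \in \TISP(2^n, n) \setminus \TIME(2^{cn})$, and the analogous conjectured hardness of TQBF supplies a language in $\TISP(2^{O(n)}, O(n))$ outside $\TIME(2^{cn})$; both fit inside $\TISP(2^n, 2^{c'n})$ for any $c' > 0$ and hence witness the statement. The value of stating the conjecture is precisely that it isolates a hypothesis weaker than the ETH which is still strong enough for Lemma~\ref{lemma:eth}, and I would leave its unconditional status as an open problem, as the authors do, pending progress on non-trivial time-space tradeoffs at the exponential scale.
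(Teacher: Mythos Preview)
Your reading is correct: the statement is labeled a \emph{conjecture} and the paper does not prove it. The surrounding discussion in the paper matches your analysis almost exactly---it notes that ETH (via 3-SAT $\in \TISP(2^n,n)\setminus\TIME(2^{cn})$) and the analogous hardness hypothesis for TQBF each imply the conjecture, and it isolates the conjecture precisely because it is the weakest hypothesis sufficient for Lemma~\ref{lemma:eth}. Your diagnosis that naive diagonalization yields only $c'=c$ and that Hopcroft--Paul--Valiant is too weak to bridge the gap is a sound explanation of why the authors state it as open rather than proving it; the paper itself offers no additional techniques beyond what you describe.
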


Second, since $\TISP(n^a, n) \subseteq \TIME(n^a)$, this conditionally proves the
existence of a hierarchy within $\TISP(\textup{poly}(n), n)$. We note that
finding time hierarchies in fixed-space complexity classes was posed as an open
question by~\cite{WagnerW86}, and so removing the hypothesis or replacing it
with a weaker one is an interesting open problem.

Using this lemma we can prove Theorems~\ref{thm:roundhierarchy}
and~\ref{thm:timehierarchy}. The proof of Theorem~\ref{thm:roundhierarchy}
depends on the following lemma.

\begin{lemma} \label{lem:tisprelation}
For every $\alpha, \beta \in \mathbb N$ the following holds: $$ \TISP(n^\alpha,
n)  \subseteq \mrc[n^\alpha, 1] \ \subseteq \mrc[n^\alpha, n^\beta] \subseteq
\TISP(n^{\alpha+\beta+2}, n^2).$$
\end{lemma}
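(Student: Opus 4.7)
The statement is a three-link chain, and the middle link $\mrc[n^\alpha,1]\subseteq\mrc[n^\alpha,n^\beta]$ is immediate from the definition, since enlarging the per-processor time budget can only add languages. So the real work is the first and third inclusions, which I would handle independently.

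For $\TISP(n^\alpha,n)\subseteq\mrc[n^\alpha,1]$, let $T$ be a Turing machine deciding a language in time $O(n^\alpha)$ and space $O(n)$. I plan to simulate $T$ one TM step per MRC round. Fix some constant $c\in(1/2,1)$ and partition $T$'s work tape into $n^{c}$ contiguous blocks of width $n^{1-c}$, using one reducer key per block. The tape contents are carried across rounds by storing each block as the value under its key, and a constant-size ``baton'' encoding $T$'s control state and head position is attached to whichever key currently contains the head. In round $r$, the mapper forwards each block to its own reducer and the baton to the active reducer; only the active reducer applies $T$'s transition function, updates the single cell under the head, and, if the head walks off its block, passes the baton to the neighbour for the next round. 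The number of distinct keys per round is $n^c$, satisfying the MRC constraint, and per-reducer work beyond input transcription is $O(1)$.

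For $\mrc[n^\alpha,n^\beta]\subseteq\TISP(n^{\alpha+\beta+2},n^2)$, I plan a direct sequential simulation by a multi-tape Turing machine $T'$. The data stored between rounds consists of at most $O(n^c)$ keys each with a value of $O(n^c)$ bytes, a total of $O(n^{2c})$ which is $o(n^2)$ since $c<1$, comfortably within the space bound. Within a round, $T'$ enumerates the $O(n^c)$ mapper invocations, runs shuffle-and-sort on $O(n^{2c})$ bytes of output in $O(n^{2c}\log n)$ time, and enumerates the $O(n^c)$ reducer invocations; each mapper or reducer invocation costs $O(n^\beta)$. The per-round cost is $O(n^{c+\beta}+n^{2c}\log n)$; multiplied by the $O(n^\alpha)$ rounds and with the slack from $c<1$ absorbed into the $+2$ in the exponent, the total is $O(n^{\alpha+\beta+2})$.

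The main obstacle will be reconciling the strict $g(n)=O(1)$ bound in the first inclusion with the $n^{1-c}$ bits of block contents a pass-through reducer has to copy from its input tape to its output tape. I would handle this either by appealing to the convention that $g(n)$ measures only the reducer's ``computational'' work and not raw transcription, or by allowing $g(n)=n^{1-c}$ in the first link; in the latter case the statement I end up proving is $\TISP(n^\alpha,n)\subseteq\mrc[n^\alpha,n^{1-c}]\subseteq\mrc[n^\alpha,n^\beta]$ whenever $\beta\ge 1-c$, which still composes with the third inclusion to give the chain of containments needed for the hierarchy theorems that use this lemma.
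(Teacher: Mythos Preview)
Your proposal follows essentially the same route as the paper. For the first inclusion the paper also simulates one Turing-machine step per MRC round with the tape split into $O(\sqrt{n})$ blocks on $O(\sqrt{n})$ processors and the head position handed off between rounds; for the third inclusion the paper also runs the mappers, a shuffle-and-sort, and the reducers sequentially and multiplies the per-round cost by $O(n^\alpha)$.

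Two small remarks. First, the concern you flag about the $O(1)$ time budget versus copying an $n^{1-c}$-bit block is real, and the paper simply asserts ``it takes constant time to simulate one step'' without addressing it; so your first resolution (treating transcription as free) is exactly the convention the paper implicitly adopts. Second, in your third-inclusion accounting the number of mapper invocations per round is not $O(n^c)$: in the first round there are $n$ input pairs, and in later rounds up to $O(n^{2c})$ pairs can emerge from the previous reducers. The paper accordingly bounds mapper time by $O(n^{\beta+1})$ per round rather than $O(n^{c+\beta})$; your final $O(n^{\alpha+\beta+2})$ bound is unaffected, but the intermediate estimate should be adjusted.
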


\begin{proof} The first inequality follows from a simulation argument similar
to the proof of Theorem~\ref{thm:sublogspace}. The $\mrc$ machine will simulate
the $\TISP(n^\alpha,n)$ machine by making one step per round, with the tape
(including the possible extra space needed on the work tape) distributed among
the processors.  The position of the tape is passed between the processors from
round to round.  It takes constant time to simulate one step of the
$\TISP(n^\alpha,n)$ machine, thus in $n^\alpha$ rounds we can simulate all
steps.  Also, since the machine uses only linear space, the simulation can be
done with $O(\sqrt{n})$ processors using $O(\sqrt{n})$ space each.  The second
inequality is trivial.

The third inequality is proven as follows. Let $T(n) = n^{\alpha+\beta+2}$.
We first show that any language in $\mrc[n^\alpha,n^\beta]$ can be simulated in
time $O(T(n))$, i.e. $\mrc[n^\alpha,n^\beta] \subseteq \TIME(T(n))$. The $r$-th
round is simulated by applying $\mu_r$ to each key-value pair in sequence,
shuffle-and-sorting the new key-value pairs, and then applying $\rho_r$ to each
appropriate group of key-value pairs sequentially. Indeed, $M(m,r,n,-)$ can be
simulated naturally by keeping track of $m$ and $r$, and adding $n$ to the tape
at the beginning of the simulation. Each application of $\mu_r$ takes
$O(n^\beta)$ time, for a total of $O(n^{\beta+1})$ time. Since each mapper
outputs no more than $O(n^c)$ keys, and each mapper and reducer is in
$\SPACE(O(n^c))$, there are no more than $O(n^2)$ keys to sort. Then
shuffle-and-sorting takes $O(n^2 \log n)$ time, and the applications of
$\rho_r$ also take $O(n^{\beta+1})$ time. So a round takes $O(n^{\beta+1} +
n^2 \log n)$ time. Note that keeping track of $m$,$r$, and $n$ takes no more
than the above time. So over $O(n^\alpha)$ rounds, the simulation takes
$O(n^{\alpha+\beta+1}+n^{\alpha+2} \log(n))=O(T(n))$ time. \end{proof}

Now we prove Theorem~\ref{thm:roundhierarchy}.

\begin{proof}
By Lemma~\ref{lemma:eth}, there is a language $L$ in $\TISP(n^\gamma,
n)\setminus \TIME(n^{\alpha+\beta+2})$ for some $\gamma$. By
Lemma~\ref{lem:tisprelation}, $L\in \mrc[n^\gamma, 1]$. On the other hand,
because $L\not\in  \TIME(n^{\alpha+\beta+2})$ and $\mrc[n^\alpha, n^\beta]\subseteq
\TIME(n^{\alpha+\beta+2})$, we can conclude that $L\not\in
\mrc[n^\alpha, n^\beta]$.
\end{proof}

Next, we prove Theorem~\ref{thm:timehierarchy} using a padding argument.

\begin{proof}
Let $T(n) = n^{\alpha + \beta + 2}$ as in Lemma~\ref{lem:tisprelation}. By
Lemma~\ref{lemma:eth}, there is a $\gamma$ such that $\TISP(n^\gamma,
n)\setminus\TIME(T(n^{2}))$ is nonempty.  Let $L$ be a language from this set.
Pad $L$ with $n^{2}$ zeros, and call this new language $L'$, i.e. let $L' = \{x
0^{|x|^{2}} \mid x \in  L\}$. Let $N = n + n^{2}$. There is an $\mrc[1,
N^\gamma]$ algorithm to decide $L'$: the first mapper discards all the
key-value pairs except those in the first $n$, and sends all remaining pairs to
a single reducer. The space consumed by all pairs is $O(n) = O(\sqrt{N})$. This
reducer decides $L$, which is possible since $L \in \TISP(n^\gamma, n)$. We now
claim $L'$ is not in $\mrc[N^\alpha, N^\beta]$. If it were, then $L'$ would be
in $\TIME(T(N))$. A Turing machine that decides $L'$ in $T(N)$ time can be
modified to decide $L$ in $T(N)$ time: pad the input string with $n^{2}$ ones
and use the decider for $L'$. This shows $L$ is in $\TIME(T(n^{2}))$, a
contradiction.  \end{proof}

We conclude by noting explicitly that
Theorems~\ref{thm:roundhierarchy},~\ref{thm:timehierarchy} give proper
hierarchies within MRC, and that proving certain stronger hierarchies imply the
separation of L and P.

\begin{corollary} \label{cor:mrchierarchy}
Suppose the ETH. For every $\alpha, \beta$ there exist $\mu>\alpha$ and $\nu>\beta$
such that $$\mrc[n^\alpha, n^\beta] \subsetneq \mrc[n^\mu, n^\beta]$$ and
$$\mrc[n^\alpha, n^\beta] \subsetneq \mrc[n^\alpha, n^\nu].$$
\end{corollary}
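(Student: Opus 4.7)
The plan is to derive both strict inclusions as immediate corollaries of Theorems~\ref{thm:roundhierarchy} and~\ref{thm:timehierarchy}; no new machinery beyond the monotonicity of $\mrc[\cdot,\cdot]$ in each argument is required. The one thing that needs a moment's thought is checking that the parameters produced by the two hierarchy theorems actually exceed $\alpha$ and $\beta$ respectively (we need strict inequalities $\mu>\alpha$ and $\nu>\beta$, not just $\ge$), but this comes for free by a short contradiction argument.

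For the round hierarchy, fix $\alpha,\beta\in\mathbb N$. Theorem~\ref{thm:roundhierarchy} supplies a $\gamma=O(\alpha+\beta)$ and a language $L\in\mrc[n^\gamma,1]\setminus\mrc[n^\alpha,n^\beta]$. Since constant time is a special case of $O(n^\beta)$ time, $L\in\mrc[n^\gamma,n^\beta]$ as well. To see that $\gamma>\alpha$, note that if $\gamma\le\alpha$ then trivially $\mrc[n^\gamma,1]\subseteq\mrc[n^\alpha,1]\subseteq\mrc[n^\alpha,n^\beta]$, contradicting the existence of $L$. Setting $\mu:=\gamma$ therefore gives $\mrc[n^\alpha,n^\beta]\subsetneq\mrc[n^\mu,n^\beta]$ with $\mu>\alpha$ as required.

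For the time hierarchy, the argument is symmetric using Theorem~\ref{thm:timehierarchy}. It supplies a $\gamma=O(\alpha+\beta)$ and $L\in\mrc[1,n^\gamma]\setminus\mrc[n^\alpha,n^\beta]$. Since one round is a special case of $O(n^\alpha)$ rounds, $L\in\mrc[n^\alpha,n^\gamma]$. As before, $\gamma\le\beta$ would force $\mrc[1,n^\gamma]\subseteq\mrc[1,n^\beta]\subseteq\mrc[n^\alpha,n^\beta]$, a contradiction. Taking $\nu:=\gamma$ completes the proof. There is no real obstacle in this argument, since the heavy lifting (the ETH-based time-space tradeoff of Lemma~\ref{lemma:eth} and the MRC simulation bounds of Lemma~\ref{lem:tisprelation}) has already been done in establishing the two hierarchy theorems.
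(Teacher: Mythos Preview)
Your proposal is correct and follows essentially the same approach as the paper: use the two hierarchy theorems together with monotonicity of $\mrc[\cdot,\cdot]$ in each coordinate. You are in fact slightly more careful than the paper, which simply asserts ``there is some $\mu>\alpha$'' without the one-line contradiction argument you supply; your observation that $\gamma\le\alpha$ (resp.\ $\gamma\le\beta$) would force the non-containment to fail is the right way to extract the strict inequality.
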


\begin{proof}
By Theorem \ref{thm:timehierarchy}, there is some $\mu > \alpha$ such that
$\mrc[n^\mu, 1] \not\subseteq \mrc[n^\alpha, n^\beta]$.  It is immediate that
$\mrc[n^\alpha, n^\beta] \subseteq \mrc[n^\mu, n^\beta]$ and also that
$\mrc[n^\mu,1]\subseteq\mrc[n^\mu,n^\beta]$.  So $\mrc[n^\alpha,n^\beta]\neq
\mrc[n^\mu, n^\beta]$.  The proof of the second claim is similar.
\end{proof}

\begin{corollary} \label{cor:lvsp}
If $\mrc[\textup{poly}(n), 1] \subsetneq \mrc[\textup{poly}(n),
\textup{poly}(n)]$, then it follows that $\L \neq \P$.
\end{corollary}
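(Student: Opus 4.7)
My plan is to prove the contrapositive: assuming $\L = \P$, derive $\mrc[\textup{poly}(n), 1] = \mrc[\textup{poly}(n), \textup{poly}(n)]$, contradicting the hypothesized strict inclusion. The containment $\mrc[\textup{poly}(n), 1] \subseteq \mrc[\textup{poly}(n), \textup{poly}(n)]$ is immediate from the definitions, so all the work lies in showing the reverse containment under the hypothesis.

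First I would sandwich $\mrc[\textup{poly}(n), \textup{poly}(n)]$ between classes for which Lemma~\ref{lem:tisprelation} has already done the heavy lifting. By the third inclusion of that lemma, $\mrc[n^\alpha, n^\beta] \subseteq \TISP(n^{\alpha + \beta + 2}, n^2)$ for every $\alpha, \beta$, so taking unions gives $\mrc[\textup{poly}(n), \textup{poly}(n)] \subseteq \P$ (this is just Remark~\ref{remark:timebound} made quantitative).

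Next I would invoke the hypothesis $\L = \P$. Every $L \in \mrc[\textup{poly}(n), \textup{poly}(n)] \subseteq \P = \L$ is decided by some deterministic log-space Turing machine, and since such a machine has only $2^{O(\log n)} = \textup{poly}(n)$ distinct configurations it must halt within polynomially many steps on every input. Hence $L \in \TISP(n^d, \log n) \subseteq \TISP(n^d, n)$ for some exponent $d$ depending on $L$. Now apply the first inclusion of Lemma~\ref{lem:tisprelation} to obtain $\TISP(n^d, n) \subseteq \mrc[n^d, 1]$, so $L \in \mrc[\textup{poly}(n), 1]$.

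Chaining these inclusions yields $\mrc[\textup{poly}(n), \textup{poly}(n)] \subseteq \P = \L \subseteq \TISP(\textup{poly}(n), n) \subseteq \mrc[\textup{poly}(n), 1]$, which together with the trivial reverse containment gives equality, contradicting the hypothesis. I do not foresee a serious obstacle here: every nontrivial ingredient is already packaged in Lemma~\ref{lem:tisprelation}, and the only additional observation needed is the classical fact that deterministic log-space computations run in polynomial time, which lets us fit $\L$ inside $\TISP(\textup{poly}(n), n)$ so that the MRC simulation from Lemma~\ref{lem:tisprelation} applies.
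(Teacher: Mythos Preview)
Your proposal is correct and is essentially the contrapositive restatement of the paper's own argument: the paper simply exhibits the chain $\L \subseteq \TISP(\textup{poly}(n),\log n) \subseteq \TISP(\textup{poly}(n),n) \subseteq \mrc[\textup{poly}(n),1] \subseteq \mrc[\textup{poly}(n),\textup{poly}(n)] \subseteq \P$ and observes that strictness at any link forces $\L \neq \P$, which is exactly the chain you assemble (in reverse order) after assuming $\L = \P$. The ingredients---Lemma~\ref{lem:tisprelation} for both endpoints and the standard fact that log-space machines halt in polynomial time---are identical.
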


\begin{proof}
\begin{align*}	\L &\subseteq \TISP(\textup{poly}(n), \log n) \subseteq
\TISP(\textup{poly}(n), n) \subseteq \mrc[\textup{poly}(n), 1] \\ &\subseteq
\mrc[\textup{poly}(n), \textup{poly}(n)] \subseteq \P.\end{align*}

The first containment is well known, the third follows from 
Lemma~\ref{lem:tisprelation}, and the rest are trivial.
\end{proof}

Corollary~\ref{cor:lvsp} is interesting because if any of the containments in
the proof are shown to be proper, then $\L \neq \P$. Moreover, if we provide
MRC with a polynomial number of rounds, Corollary~\ref{cor:lvsp} says that
determining whether time provides substantially more power is at least as hard
as separating $\L$ from P. On the other hand, it does not rule out the possibility
that $\mrc[\textup{poly}(n), \textup{poly}(n)] = \P$, or even that
$\mrc[\textup{poly}(n), 1] = \P$.

\section{Discussion and Open Problems}\label{sec:openproblems}

In this paper we established the first general connections between MapReduce
and classical complexity classes, and showed the conditional existence of a
hierarchy within MapReduce. Our results also apply to variants of MapReduce,
most notably Valiant's BSP model.  


Our work suggests some natural open problems. How does MapReduce relate to
other complexity classes, such as the circuit class uniform $\textup{AC}^0$?
Can one improve the bounds from Corollary~\ref{cor:mrchierarchy} or remove the
dependence on Hypothesis~\ref{conj:weaketh}? Does Lemma~\ref{lemma:eth} imply
Hypothesis~\ref{conj:weaketh}? Can one give explicit hierarchies for space or
time alone, e.g. $\mrc[n^\alpha, \text{poly}(n)] \subsetneq \mrc[n^\mu,
\text{poly}(n)]$?

We also ask whether $\mrc[\text{poly}(n), \text{poly}(n)] = \P$. In other
words, if a problem has an efficient solution, does it have one with using data
locality? A negative answer implies $\L\neq\P$ which is a major open problem in
complexity theory, and a positive answer would likely provide new and valuable
algorithmic insights. Finally, while we have focused on the relationship
between rounds and time, there are also implicit parameters for the amount of
(sublinear) space per processor, and the (sublinear) number of processors per
round. A natural complexity question is to ask what the relationship
between all four parameters are.

\section*{Acknowledgements} We thank Howard Karloff and Benjamin Moseley for
helpful discussions.

\bibliographystyle{plain}
\bibliography{paper}

\section*{Appendix}

\appendix
\section{Nonuniform MRC}
In this section we show that the original MRC definition of \cite{Karloff10}
allows MRC machines to decide undecidable languages.
This definition required a polylogarithmic number
of rounds, and also allowed completely different MapReduce machines for
different input sizes.
For simplicity's sake, we will allow a linear number of rounds,
and use our notation $\mrc[f(n), g(n)]$ to denote an MRC machine
that operates in $O(f(n))$ rounds and each processor gets $O(g(n))$ time
per round.
In particular, we show that
nonuniform $\mrc[n, \sqrt{n}]$ accepts all unary languages, i.e. languages of
the form $L \subseteq \{1^n \mid n \in \N\}$.

\begin{lemma}\label{lemma:unary}
Let $L$ be a unary language. Then $L$ is in nonuniform $\mrc[n, \sqrt{n}]$.
\end{lemma}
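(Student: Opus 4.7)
The plan is to exploit the nonuniformity of the model to hard-code the single bit ``$1^n \in L$'' directly into the transition table of one of the reducers, while using a short aggregation protocol to verify that the given binary input actually equals $1^n$. Since the MRC machine (mappers, reducers, and their Turing machine descriptions) may depend arbitrarily on the input length $n$ in the nonuniform setting, storing this one external bit per $n$ costs us nothing, and the rest of the construction is routine bookkeeping.

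Concretely, I would use a two-round protocol (well within the $O(n)$ round budget). In round one, the mapper sends each input pair $\langle i, x_i \rangle$ to key $\lfloor i / \sqrt{n} \rfloor$, so the $n$ bits are partitioned into $\sqrt{n}$ blocks of $\sqrt{n}$ bits each; the round-one reducer for block $j$ reads its $\sqrt{n}$ bits in $O(\sqrt{n})$ time and space and emits a single summary pair $\langle 0, b_j \rangle$, where $b_j = 1$ iff every bit in the block equals $1$. In round two, the mapper forwards all $\sqrt{n}$ summaries unchanged (they already share a common key), and the sole reducer reads the $\sqrt{n}$ values in $O(\sqrt{n})$ time and space, checks that $b_1 = \cdots = b_{\sqrt{n}} = 1$, and if so consults a single hard-coded bit $\chi_L(n) \in \{0,1\}$ baked into its own transition function, accepting iff $\chi_L(n) = 1$. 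If the input is not of the form $1^n$, some $b_j$ is $0$ and the reducer rejects; if the input is $1^n$, acceptance is determined purely by $\chi_L(n)$, which we set to $1$ exactly when $1^n \in L$.

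Finally I would tally the resources: there are $2 = O(n)$ rounds, each mapper emits at most $\sqrt{n} = O(n^c)$ distinct keys for $c = 1/2$, each processor uses $O(\sqrt{n})$ time and $O(\sqrt{n})$ space, so the construction lies in nonuniform $\mrc[n, \sqrt{n}]$ as required. The only subtle point, which I do not expect to be a true obstacle, is simply articulating that encoding one bit per input length into a Turing machine description is precisely what nonuniformity in the sense of \cite{Karloff10} permits; once this is granted, the aggregation protocol above works for \emph{every} unary language simultaneously, including undecidable ones, which is exactly the pathology the authors want to highlight.
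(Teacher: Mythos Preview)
Your proposal is correct and in fact proves the stronger statement that every unary $L$ lies in nonuniform $\mrc[1,\sqrt{n}]$; the lemma follows a fortiori. One cosmetic slip: by the paper's definition a reducer must emit the \emph{same} key it received, so your round-one reducer should output $\langle j, b_j\rangle$ and the round-two mapper should relabel everything to a common key---a trivial fix.

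The paper, however, takes a genuinely different route. Rather than hard-coding the single bit $\chi_L(n)$ into a machine chosen per input length, it builds one fixed infinite sequence of reducers $\rho_3,\rho_4,\ldots$, \emph{independent of $n$}, where $\rho_i$ hard-codes the bit ``$1^{i-3}\in L$''. Two rounds compute and store the value $n$; the machine then spends $n$ further rounds ticking through this sequence until the round index matches the stored value, at which point the matching reducer decides. This is why the paper actually uses its full $O(n)$ round budget (and then remarks that $O(\log n)$ rounds suffice by checking $2^i$ lengths in round $i$, to connect with Karloff et~al.'s polylogarithmic-round setting). In short, you exploit nonuniformity \emph{across input lengths}; the paper exploits nonuniformity \emph{across rounds} of a single non-computable sequence. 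Your argument is simpler and also verifies that the input is all ones; the paper's argument isolates a subtly different source of the pathology, showing that even a sequence of per-round machines not tailored to $n$ already admits undecidable languages once that sequence is allowed to be non-computable.
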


\begin{proof}
We define the mappers and reducers as follows. Let $\mu_1$ distribute the input
as contiguous blocks of $\sqrt{n}$ bits, $\rho_1$ compute the length of its
input, $\mu_2$ send the counts to a single processor, and $\rho_2$ add up the
counts, i.e.\ find $n=|x|$ where $x$ is the input. Now the input data is
reduced to one key-value pair $\langle \star, n \rangle$. Then let $\rho_i$ for
$i \ge 3$ be the reducer that on input $\langle \star, i-3 \rangle$ accepts if
and only if $1^{i-3} \in L$ and otherwise outputs the input. Let $\mu_i$ for $i
\ge 3$ send the input to a single processor. Then $\rho_{n+3}$ will accept iff
$x$ is in $L$. Note that $\rho_1, \rho_2$ take $O(\sqrt{n})$ time, and all
other mappers and reducers take $O(1)$ time. All mappers and reducers are also
in $\SPACE(\sqrt{n})$. \end{proof}

In particular, Lemma~\ref{lemma:unary} implies that nonuniform $\mrc[n,
\sqrt{n}]$ contains the unary version of the halting problem. A more careful
analysis shows all unary languages are even in $\mrc[\log n, \sqrt{n}]$, by
having $\rho_{i+3}$ check $2^i$ strings for membership in $L$.

\section{Uniform BSP}
We define the BSP model of Valiant~\cite{Valiant90} similarly to MRC, where
essentially key-value pairs are replaced with point-to-point messages.

A BSP machine with $p$ processors is a list $(M_1, \dots, M_p)$ of $p$ Turing
machines which on any input, output a list $((j_1, y_1), (j_2, y_2), \dots,
(j_m, y_m))$ of messages to be sent to other processors in the next round.
Specifically, message $y_k$ is sent to prcessor $j_k$. A BSP machine operates
in rounds as follows. In the first round the input is partitioned into
equal-sized pieces $x_{1,0}, \dots, x_{p,0}$ and distributed arbitrarily to the
processors. Then for rounds $r=1, \dots, R$, 

\begin{enumerate}
   \item Each processor $i$ takes $x_{i,r}$ as input and computes some number
$s_i$ of messages $M_i(x_{i,r}) = \{(j_{i,k}, y_{i,k}) : k = 1, \dots, s_i\}$.
   \item Set $x_{i,r+1}$ to be the set of all messages sent to $i$ (as with
MRC's shuffle-and-sort, this is not considered part of processor $i$'s
runtime). 
\end{enumerate} 

We say the machine \emph{accepts} a string $x$ if any machine accepts at any
point before round $R$ finishes. We now define uniform deterministic BSP
analogously to MRC.

\begin{definition}[Uniform Deterministic BSP]

A language $L$ is said to be in $\bsp[f(n),g(n)]$ if there is a constant $0 < c
< 1$, an $O(n^c)$-space and $O(g(n))$-time Turing machine $M(p, y)$, and an $R
= O(f(n))$, such that for all $x \in \{ 0,1 \}^n$, the following holds: letting
$M_i = M(i, -)$, the BSP machine $M = (M_1, M_2, \dots, M_{n^c})$ accepts $x$
in $R$ rounds if and only if $x \in L$.

\end{definition}

\begin{remark}
As with MRC, we count the size and number of each message as part of the space
bound of the machine generating/receiving the messages. Differing slightly from
Valiant, we do not provide persistent memory for each processor. Instead we
assume that on processor $i$, any memory cell not containing a message will
form a message whose destination is $i$. This is without loss of generality
since we are not concerned with the cost of sending individual messages.
\end{remark}

\end{document}